\documentclass{article}
\usepackage{graphicx} 
\usepackage{xcolor}
\usepackage[numbers,sort&compress]{natbib}
\usepackage{geometry}
\usepackage{amsmath}
\usepackage{amsthm}
\usepackage{tikz}
\usepackage{amssymb}
\usepackage{braket}
\usepackage{appendix}
\usepackage{hyperref}
\usepackage{authblk}

\theoremstyle{plain}
\newtheorem{theorem}{Theorem}[section]

\newtheorem{proposition}[theorem]{Proposition}

\theoremstyle{definition}

\theoremstyle{remark}

\newcommand{\smallbox}[4]{
\pgfmathsetmacro{\x}{#1}
\pgfmathsetmacro{\y}{#2}
\pgfmathsetmacro{\rx}{#3}
\pgfmathsetmacro{\ry}{#4}

\begin{scope}[shift={(\x,\y)},xscale=\rx,yscale=\ry]
\fill[white](1,-1)--(1,1)--(-1,1)--(-1,-1)--(1,-1);
\draw(1,-1)--(1,1)--(-1,1)--(-1,-1)--(1,-1);
\end{scope}

}

\begin{document}
\title{Resonance Breaking in Noisy Shor's Algorithm}
\author[1,2]{Zhengwei Liu\thanks{%
  Email: \href{mailto:liuzhengwei@mail.tsinghua.edu.cn}
  {\nolinkurl{liuzhengwei@mail.tsinghua.edu.cn}}%
}}
\affil[1]{Yau Mathematical Sciences Center and Department of Mathematics, Tsinghua University, Beijing, 100084, China}
\affil[2]{Yanqi Lake Beijing Institute of Mathematical Sciences and Applications, Huairou District,
Beijing, 101408, China}
\date{\today}

\maketitle


\begin{abstract}
We argue that the exponential quantum advantage in Shor's algorithm is broken under one-layer depolarizing noise of arbitrary small error rate, by analyzing a resonance breaking phenomenon in noisy quantum circuits. First, we express the distribution of measurements on $n$-qubit strings as the superposition of the $4^n$ wave functions in the Pauli path integral. In the noiseless case, the bit-strings achieving resonant peaks guaranteed a constant rate of successful measurements to factor a large number. Secondly, when the middle layer of the quantum circuit has independent depolarizing noise of rate $\lambda$, for any Hamming weight $d$, we obtain corresponding measurement rate bounded by $2(1-\lambda)^d$ for higher frequency terms and by $O(n^d)/2^{n/2}$ for low frequency terms. The rate approaches to zero as $n$ and $d$ approaches to infinity. Resonance breaking destroys the exponential quantum advantage. Furthermore, we design a classical factoring algorithm in polynomial time $O(n^d)$ to substitute the low frequency contribution.  
\end{abstract}

\section{Introduction}

Shor's factoring algorithm sparked widespread interest in quantum computing.
By establishing polynomial-time integer factorization on an ideal quantum computer~\cite{Shor1997}, it offered a concrete prospect of an exponential computational advantage for a problem of direct cryptographic significance~\cite{RivestShamirAdleman1978}.
Realizing this advantage in practice has become a major goal of quantum computing research~\cite{GidneyEkera2021,GouzienSangouard2021,Gidney2025,WebsterEtAl2026,CainEtAl2026}.
Shor's algorithm relies on quantum amplitudes adding constructively to produce peaks in the measurement distribution that encode the order.
Noise can disrupt the relative phases of these amplitudes, weakening their constructive interference, and reducing the probability of obtaining outcomes near the peaks. Recently, Cai proved that random angular errors in Fourier gates can make the probability of useful order-finding results negligible, preventing efficient factorization even when noise strength decreases with input size~\cite{Cai2023}. Early studies investigated how decoherence and dissipation affect the performance of Shor's algorithm~\cite{ChuangEtAl1995,MiquelEtAl1996,BarencoEtAl1996}.


Inspired by the Quon classical simulation framework, where the number of remaining ``Magic holes'' governs the cost of its classical simulation algorithm~\cite{FengEtAl2025}, we observed that quantum advantage comes from composing different types of quantum circuits, such as Clifford circuits~\cite{AaronsonGottesman2004} or matchgate circuits~\cite{Valiant2002,JozsaMiyake2008}. In contrast, if the composition boundary of different types of circuits has noise, then the noise can severely degrade the exponential quantum advantage of the quantum circuit.

We analyze this phenomenon in Shor's algorithm to factor a large number $N$ for a given number $a$ with order $r$ mod $N$. More precisely, we apply independent single-qubit depolarizing noise of rate $\lambda$ to the $n$-qubit control register between modular exponentiation and the inverse Fourier transform, with all other operations ideal. First, we present the resonance effect of exponentially many wave functions in the Pauli path integral~\cite{GaoDuan2018,AharonovEtAl2023,ShaoEtAl2024,FontanaEtAl2025,GonzalezGarciaEtAl2025,MartinezEtAl2025,ShaoEtAl2025} for Shor's quantum circuit. The probability of measuring a bit-string $x$ is
\begin{equation}
P_{\lambda}(x)= 2^{-2n} \sum_{0\leq u,v < 2^n ; r \mid u-v}
(1-\lambda)^{h(u,v)} \omega^{x (u-v)}  ,
\end{equation}
where $\omega=e^{\frac{2\pi i}{2^n}}$ and $h(u,v)$ is the Hamming distance between the binary expansions of $u$ and $v$.
For the case without noise, namely $\lambda=0$, the $4^n$ phases $\omega^{x (u-v)}$, for $0\leq u,v < 2^n$, achieves resonance peaks at $x$ in the following set
\begin{equation}
\mathcal{E}_r:=\left\{x\in\{0,\ldots,2^n-1\}\ \middle|\
\exists k\in\{0,\ldots,r-1\},\
\left|\frac{x}{2^n}-\frac{k}{r}\right|
\leq\frac{1}{2^{n+1}}\right\}.
\end{equation}
Then applying the continued fraction theorem to such $\frac{x}{2^n}$ one can find the order $r$ and then a non-trivial factor of the large number $N$. 
The successful measurement rate is guaranteed by the crucial lower bound 
\begin{equation}
\sum_{x \in \mathcal{E}_r}
P_0(x) \geq \frac{4}{\pi^2}-O(\frac{1}{N}).
\end{equation}

Secondly, we show that the resonance breaks when the noise rate $\lambda>0$. For any Hamming weight $d$, the high frequency terms $h(u,v)\geq d$ will diminish exponentially proportion to $(1-\lambda)^d$, and the low frequency terms have a bound $O(n^d)/2^{n/2}$. 
\begin{equation}
\sum_{x \in \mathcal{E}_r}
P_{\lambda}(x) \leq \frac{O(n^d)}{2^{n/2}}+2(1-\lambda)^d.
\end{equation}
As $n\to \infty$ and $d\to \infty$, it approaches to zero. The successful measurement rate is no longer guaranteed. The resonance breaking phenomenon destroys the exponential quantum advantage.

For a fixed $n$, we further present a classical algorithm in polynomial time $O(n^d)$ to substitute the low frequency contributions. To keep the same exponential factor $(1-\lambda)^d$, when the noise rate $\lambda$ decreases, the degree $d$ increases. 
Quantum error correction~\cite{Shor1995,Steane1996,DennisEtAl2002} may help improve performance, and the threshold theorem allows arbitrarily small logical error rates below the threshold under suitable noise assumptions~\cite{AharonovBenOr2008,KnillLaflammeZurek1998,AliferisGottesmanPreskill2006}, achieving this suppression requires additional physical qubits and repeated correction operations~\cite{FowlerEtAl2012,GoogleQuantumAI2025,BravyiEtAl2024}. In practice, the polynomial degree may grow too high to be achieved by a classical computer, while reducing the noisy rate using quantum error corrections on a quantum computer remains challenging as well~\cite{Terhal2015}.

We hope that our analysis of the frequency decomposition of resonant wave functions in the Pauli path integral will shed light on the theoretical understanding of exponential quantum advantage in quantum circuits. The resonance breaking phenomenon allows us to analyze a polynomial time classical simulation to a single noisy circuit, instead of the average behavior of random noisy circuits. Resonance seems essential to achieve a reasonable measurement rate, as every path only contributes to an exponentially small rate in the path integral. The higher frequency contribution may be compressed exponentially by noise and the lower frequency contribution may be substituted by a corresponding classical algorithm in polynomial time.

\section{Shor's algorithm}
We first recall Shor's algorithm for factoring large numbers~\cite{Shor1997}.
We write $0\leq x<2^n$ in binary $x=\sum\limits_{i=0}^{n-1} 2^ix_i$, denoted as $\ket{x}:=\ket{x_{n-1}...x_1x_0}$.
The quantum Fourier transform is defined as
\begin{equation}
QFT\ket{x}:=\frac{1}{\sqrt{2^n}}\sum_{y=0}^{2^n-1}\omega^{xy}\ket{y}, \quad\omega=e^{\frac{2\pi i}{2^n}},
\end{equation}
which can be implemented by a quantum circuit of elementary gates. 



\newcommand{\stringbox}[5]{
\pgfmathsetmacro{\rx}{#1}
\pgfmathsetmacro{\ry}{#2}
\pgfmathsetmacro{\xx}{#3}
\pgfmathsetmacro{\yy}{#4}
\pgfmathsetmacro{\rr}{#5}

\begin{scope}[shift={(\xx,\yy)}]
    \draw(-1*\rx-\rr,-\ry+\rr)--(-1*\rx-\rr,\ry-\rr)[bend left=30]to(-1*\rx+\rr,\ry+\rr)--(\rx-\rr,\ry+\rr)[bend left=30]to(\rx+\rr,\ry-\rr)--(\rx+\rr,-1*\ry+\rr)[bend left=30]to(\rx-\rr,-1*\ry-\rr)--(-\rx+\rr,-1*\ry-\rr)[bend left=30]to(-\rx-\rr,-1*\ry+\rr);
\end{scope}
}

Suppose $N$ is a large number, e.g., $N=pq$ for a pair of prime numbers in RSA cryptography. Take the qubit number $n$, such that 
\begin{equation}\label{Equ:n}
N^2< 2^n \leq 2 N^2.    
\end{equation}
Choose a number $1<a<N$, e.g. $a=2$, and first compute
$\gcd(a,N)$. If this gcd is nontrivial, it already gives a factor of
$N$. Otherwise, assume $\gcd(a,N)=1$. Shor established a quantum
circuit to find the multiplicative order $r$ of $a$ modulo $N$.
The order $r$ is called good, if $r$ is even and
$N\nmid a^{\frac{r}{2}}+1$ and $ N\nmid a^{\frac{r}{2}}-1$.
For a good $r$,
\begin{equation}
N \mid (a^{\frac{r}{2}}+1)(a^{\frac{r}{2}}-1),
\end{equation}
and both $gcd(a^{\frac{r}{2}}+1, N)$ and $gcd(a^{\frac{r}{2}}-1, N)$ are non-trivial factors of $N$. When $N$ is odd and has at least two distinct prime factors, repeated independent trials with uniformly random $a\in(\mathbb{Z}/N\mathbb{Z})^\times$ yield a good order with high probability~\cite{Shor1997}.

Shor applied the following quantum circuit to find the order $r$.

\begin{center}
    \begin{tikzpicture}
        \begin{scope}[yscale=-1]
        
        \draw(-3,0)--(5,0);
        \foreach \i in {1,2,3,4}{
        \draw(-3,-1*\i)--(5,-1*\i);
        \node at (-3.5,-1*\i){$\ket{0}$};
        \smallbox{-2}{-1*\i}{0.3}{0.3}
        \node at (-2,-1*\i){$H$};
        }
        \draw(-2.5,-.2)--(-2,+.2);
        \node at (-3.8,0){$\ket{00...01}$};

        \foreach \j in {2,1,0,-1}{
        \fill(\j,-2-\j) circle(.05);
        \draw(\j,0)--(\j,-2-\j);
        \smallbox{\j}{0}{0.3}{0.3}
        }
        
        \smallbox{2}{0}{0.5}{0.3}
        \node at (-1,0){$a^1$};
        \node at (0,0){$a^2$};
        \node at (1,0){$a^{2^2}$};
        \node at (1.5,-0.5){$...$};
        \node at (2,0){$a^{2^{n-1}}$};

        \smallbox{3.5}{-2.5}{0.7}{2}
        \node at (3.5,-2){$QFT^{-1}$};
\foreach \k in {-1,-2,-3,-4}{
        \smallbox{5}{\k}{0.4}{0.25}
        \begin{scope}[shift={(5,\k)},yscale=-1]
            \draw(-.25,-.2)arc[start angle=180,end angle=0,radius=.25];
            \draw[->](-.1,-.2)--(.2,.2);
        \end{scope}
}

        \end{scope}
    \end{tikzpicture}
\end{center}
The controlled transformation apply to bottom line as the multiplication by $a^{2^i}$ mod $N$.
We simplify the quantum circuit presentation as follows:
\begin{center}
    \begin{tikzpicture}
        \begin{scope}[yscale=-1]
        \foreach \i in {0,1}{
        \draw(-1,-1*\i)--(5,-1*\i);
        \draw(-.5,-.2)--(0,+.2);
        }
        
        \node at (-1.5,-1){$\ket{+}^{\otimes n}$};
        \node at (-1.5,0){$\ket{1}_n$};
        
        \fill(1,-1) circle(.05);
        \draw(1,0)--(1,-1);
        \smallbox{1}{0}{0.3}{0.3}
        \node at (1,0){$U$};

        \smallbox{3.5}{-1}{0.7}{0.3}
        \node at (3.5,-1){$QFT^{-1}$};

        \smallbox{5}{-1}{0.4}{0.25}
        \begin{scope}[shift={(5,-1)},yscale=-1]
            \draw(-.25,-.2)arc[start angle=180,end angle=0,radius=.25];
            \draw[->](-.1,-.2)--(.2,.2);
        \end{scope}

        \end{scope}
    \end{tikzpicture}
\end{center}

Considering $\ket{+}^{\otimes n}$ as a superposition of all $\ket{u}:=\ket{u_{n-1}...u_0}$, 
the state after $QFT^{-1}$ is
\begin{equation}
2^{-n}\sum_{x=0}^{2^n-1}\sum_{u=0}^{2^n-1}
\omega^{-xu}\ket{x}\otimes\ket{a^u\bmod N}.
\end{equation}
After tracing out the second register, the probability of measuring
$\ket{x}$ is
\begin{equation}
\begin{aligned}
P(x)
&=2^{-2n}\sum_{0\leq u,v<2^n;\ r\mid u-v}
\omega^{-x(u-v)}\\
&=2^{-2n}\sum_{0\leq u,v<2^n;\ r\mid u-v}
\omega^{x(u-v)},
\end{aligned}
\end{equation}
where the second equality follows by exchanging $u$ and $v$.

When $|\frac{x}{2^n}- \frac{k}{r}| \leq  \frac{1}{2^{n+1}}$ for some integer $k$, Eq.~\eqref{Equ:n} and $r<N$ give
\begin{equation}
\left|\frac{x}{2^n}-\frac{k}{r}\right|
\leq\frac{1}{2\cdot2^n}<\frac{1}{2r^2}.
\end{equation}
Then the continued fraction theorem implies that the reduced form of $\frac{k}{r}$ appears among the convergents of $\frac{x}{2^n}$, which can be computed in time polynomial in $n$~\cite{Shor1997,NielsenChuang2010}.
The denominator $r(x)$ is $\frac{r}{gcd(r,k)}$, where $gcd(r,k)$ is the greatest common divisor. The order $r$ can be tested through the condition $a^r\equiv 1$ mod $N$. If $r(x)=r$, then we obtain the order $r$. Otherwise, one can sample different measurement $x$, and choose multiple $r(x)$ for different $x$, and then test their least common multiple (lcm) as a candidate for the order $r$~\cite{Shor1997}.

Note that
\begin{equation}
P(x)= 2^{-2n} \sum_{0\leq u,v < 2^n ; r \mid u-v} \omega^{2^n(\frac{x}{2^n}-\frac{k}{r}) (u-v)}.
\end{equation}

When $\frac{x}{2^n}$ approaches $\frac{k}{r}$, the phase $\omega^{2^n(\frac{x}{2^n}-\frac{k}{r}) (u-v)}$ approaches 1. Then $P(x)$ as their resonant sum approaches approximately to $\frac{1}{r}$.
When $|\frac{x}{2^n}- \frac{k}{r}|$ increases from $0$ to $\frac{1}{2^{n+1}}$, the probability decreases from $\frac{1}{r}$ to $\frac{4}{\pi^2 r}$ up to a small error $O(\frac{1}{2^n})$ as shown in ~\cite{Shor1997,NielsenChuang2010}.
Summing these peak bounds yields the crucial constant bound for the effective measurements
\begin{equation}\label{Equ:Shor lower bound}
\sum_{\substack{0\leq x<2^n:\ \exists k\in\{0,\ldots,r-1\},\\
|\frac{x}{2^n}-\frac{k}{r}|\leq\frac{1}{2^{n+1}}}}
P(x) \geq \frac{4}{\pi^2}-O(\frac{1}{N}).
\end{equation}
Combined with repeated sampling, classical post-processing and random choices of $a$, this gives a constant overall success probability using polynomial time~\cite{Shor1997,Ekera2024}.

In Shor's factoring algorithm, the quantum circuit is to find the order $r$ of $a$ through measurements, in addition to the classical algorithm of the continued fraction theorem.

\section{Resonance breaking under one layer noise}

Now let us analyze noisy Shor's algorithm.
The depolarizing noise of a single qubit is a quantum channel
$\mathcal{N}_{\lambda}$ with noise rate $0\leq\lambda\leq 1$, defined by
\begin{equation}
\mathcal{N}_{\lambda}(D)=(1-\lambda)D+\lambda I/2
\end{equation}
for every single-qubit density matrix $D$. Thus $\lambda=0$ is the
noiseless case, while every traceless Pauli operator is contracted by
the coherence factor $1-\lambda$.
We add a one-layer depolarizing noise $\mathcal{N}_{\lambda}$ to the most fragile part of Shor's quantum circuit, which acts on the $n$ qubits between the controlled unitary and $QFT^{-1}$. Let $P_{\lambda}(x)$ be the probability of $n$-qubit measuring $x$, which is presented as follows:

\begin{center}
    \begin{tikzpicture}
        \begin{scope}
        \foreach \i in {0,1}{
        \draw(-1,-1*\i)--(6,-1*\i);
        \draw(-.5,.2)--(0,-.2);
        }

        \node at (-1.5,-1){$\ket{+}^{\otimes n}$};
        \node at (-1.5,0){$\ket{1}_n$};
        
        \fill(1,-1) circle(.05);
        \draw(1,0)--(1,-1);
        \smallbox{1}{0}{0.3}{0.3}
        \node at (1,0){$U^{\dagger}$};

        \smallbox{4.5}{-1}{0.8}{0.3}
        \node at (4.5,-1){$QFT$};

        \node at (6.5,-1) {$\bra{x}$};

        \end{scope}

        \begin{scope}[shift={(0,-3)},yscale=-1]
        \foreach \i in {0,1}{
        \draw(-1,-1*\i)--(6,-1*\i);
        \draw(-.5,-.2)--(0,.2);
        }

        \node at (-1.5,-1){$\ket{+}^{\otimes n}$};
        \node at (-1.5,0){$\ket{1}_n$};
        
       \fill(1,-1) circle(.05);
        \draw(1,0)--(1,-1);
        \smallbox{1}{0}{0.3}{0.3}
        \node at (1,0){$U$};

        \smallbox{4.5}{-1}{0.8}{0.3}
        \node at (4.5,-1){$QFT^{-1}$};

        \node at (6.5,-1) {$\bra{x}$};
        
        \end{scope}
        
        \smallbox{2.5}{-1.5}{.8}{.8}
        \node at (2.5,-1.5){$\mathcal{N}_{\lambda}^{\otimes n}$};

        \draw (6,0)--(7,0)--(7,-3)--(6,-3);
    \end{tikzpicture}
\end{center}
Here, we applied that $(QFT^{-1})^{\dagger}=QFT$.

\begin{proposition}
For the Shor circuit with single-qubit depolarizing noise rate
$\lambda$, the probability of measuring $x$ is
\begin{equation}\label{Equ: P=sum h1}
P_{\lambda}(x)= 2^{-2n} \sum_{0\leq u,v < 2^n ; r \mid u-v}
(1-\lambda)^{h(u,v)} \omega^{x (u-v)}  ,
\end{equation}
where $h(u,v)$ is the Hamming distance between the binary expansions
of $u$ and $v$.
\end{proposition}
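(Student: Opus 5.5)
I would compute the reduced density matrix of the control register right after the controlled unitary, apply the noise channel entrywise in the computational basis, and then evaluate the diagonal entry of the final state after $QFT^{-1}$.

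\textbf{Step 1: the state before the noise.} Before the noise the joint state is $2^{-n/2}\sum_u \ket{u}\otimes\ket{a^u \bmod N}$. Tracing out the second register gives
\[
\rho=2^{-n}\sum_{u,v}\braket{a^v\bmod N|a^u\bmod N}\,\ket{u}\bra{v}.
\]
The inner product equals $1$ when $a^u\equiv a^v \pmod N$, i.e.\ when $r\mid u-v$ (using $\gcd(a,N)=1$ and that $r$ is the order), and $0$ otherwise. Hence
\[
\rho=2^{-n}\sum_{r\mid u-v}\ket{u}\bra{v}.
\]
Since the noise acts only on the control register and the second register is traced out at the end, it is legitimate to trace it out first.

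\textbf{Step 2: the noise on matrix units.} Next I determine how $\mathcal{N}_\lambda^{\otimes n}$ acts on each matrix unit $\ket{u}\bra{v}=\bigotimes_i \ket{u_i}\bra{v_i}$. For a single qubit:
\begin{itemize}
\item If $u_i\neq v_i$, the unit $\ket{u_i}\bra{v_i}$ is traceless, so $\mathcal{N}_\lambda$ multiplies it by $1-\lambda$.
\item If $u_i=v_i$, the unit is mapped to $(1-\lambda)\ket{u_i}\bra{u_i}+\lambda I/2$.
\end{itemize}
This is where the only real care is needed. The diagonal factors are not simply preserved: they produce $\lambda I/2$ terms, which mix $\ket{u}\bra{v}$ with other units $\ket{u'}\bra{v'}$ having $u'_i=v'_i$ at that position. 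So I would not claim termwise invariance. Instead I would argue at the level of the whole sum, as follows.

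\textbf{Step 3: the sum is invariant under the mixing.} Write the single-qubit channel on a diagonal unit as
\[
\mathcal{N}_\lambda(\ket{b}\bra{b})=(1-\tfrac{\lambda}{2})\ket{b}\bra{b}+\tfrac{\lambda}{2}\ket{\bar b}\bra{\bar b}.
\]
This is a flip of bit $i$ simultaneously in $u$ and $v$. Flipping bit $i$ in both $u$ and $v$ preserves $u-v$: on a coordinate where $u_i=v_i$, both change by the same $\pm 2^i$. It also preserves the Hamming distance $h(u,v)$. The coefficient function $(u,v)\mapsto[r\mid u-v]$ depends only on $u-v$, so the map $(u,v)\mapsto(u\oplus e_i,v\oplus e_i)$, restricted to pairs with $u_i=v_i$, is a bijection of the index set that preserves the coefficient.

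Expanding the product over all diagonal positions $S=\{i: u_i=v_i\}$, the matrix $\sum_{(u,v):\,\text{fixed off-diagonal pattern}} c(u-v)\ket{u}\bra{v}$ is therefore mapped to itself, because the weights $(1-\lambda/2)$ and $\lambda/2$ sum to $1$. Hence
\[
\mathcal{N}_\lambda^{\otimes n}(\rho)=2^{-n}\sum_{r\mid u-v}(1-\lambda)^{h(u,v)}\ket{u}\bra{v}.
\]
I expect verifying this regrouping cleanly to be the main obstacle. An alternative is to work in the Pauli basis, where $\mathcal{N}_\lambda$ is diagonal with eigenvalue $1-\lambda$ on $X,Y$ and $1$ on $Z$, and $\ket{u}\bra{v}$ has only $X/Y$ components on the positions where $u_i\neq v_i$.

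\textbf{Step 4: the final probability.} Finally, I would compute
\[
P_\lambda(x)=\bra{x}QFT^{-1}\,\sigma\,QFT\ket{x},
\]
where $\sigma$ is the state after the noise. Using $QFT\ket{x}=2^{-n/2}\sum_v\omega^{xv}\ket{v}$, each term $\ket{u}\bra{v}$ contributes $2^{-n}\omega^{-xu}\omega^{xv}$. Combined with the $2^{-n}$ already in $\sigma$, this gives
\[
P_\lambda(x)=2^{-2n}\sum_{r\mid u-v}(1-\lambda)^{h(u,v)}\omega^{-x(u-v)}.
\]
Swapping $u$ and $v$, which is symmetric in both $h$ and the divisibility condition, turns $\omega^{-x(u-v)}$ into $\omega^{x(u-v)}$, exactly as in the noiseless case, and yields the stated formula.
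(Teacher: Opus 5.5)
Your proof is correct, but it runs in the Schr\"odinger picture while the paper's proof runs in the Heisenberg picture.

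\textbf{The paper's route.} The paper uses the Hilbert--Schmidt self-adjointness of $\mathcal{N}_\lambda$ to move the noise onto the measurement operator $M_x=QFT\ket{x}\!\bra{x}QFT^{-1}$. This operator is a tensor product of single-qubit factors, each $\tfrac12$ times a matrix whose two diagonal entries both equal $1$. The diagonal part of each factor is therefore a multiple of $I$, so the channel only multiplies the off-diagonal entries by $1-\lambda$ and no mixing ever arises. Pairing $\mathcal{N}_\lambda^{\otimes n}(M_x)$ with $\rho_r$ then gives the formula directly.

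\textbf{Your route.} You apply the channel to $\rho_r$, which is not a product state, so you have to control the $\lambda I/2$ terms. Your involution $(u,v)\mapsto(u\oplus e_T,v\oplus e_T)$ on pairs with $T\subseteq S$ does this correctly. The step you flagged as the main obstacle is actually immediate. For a fixed off-diagonal pattern, $u-v$ is fixed. Summing over the common bits on $S$ therefore gives exactly $[r\mid u-v]\bigotimes_{i\notin S}\ket{u_i}\!\bra{v_i}\otimes I_S$, and $\mathcal{N}_\lambda(I)=I$. This grouping coincides with the paper's later minimal-weight-representative classes, which have $2^{n-h(u,v)}$ pairs each.

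\textbf{What each approach buys.} The paper's computation never uses the structure of the input state. For an arbitrary control-register state $\rho$ it gives $P_\lambda(x)=2^{-n}\sum_{u,v}(1-\lambda)^{h(u,v)}\omega^{x(v-u)}\bra{u}\rho\ket{v}$, and the divisibility condition enters only at the final trace. Your Step 3 genuinely needs the coefficient of $\ket{u}\!\bra{v}$ to depend only on $u-v$. For a general $\rho$, entrywise damping of the state is false; for example, $\rho=\ket{0}\!\bra{0}$ on one qubit is not fixed by $\mathcal{N}_\lambda$. In exchange, you get a stronger intermediate statement: a closed form for the noisy state $\mathcal{N}_\lambda^{\otimes n}(\rho_r)$ itself. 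This closed form reflects the fact that no Pauli word in the expansion of $\rho_r$ contains a $Z$, and it holds regardless of the subsequent $QFT^{-1}$ and measurement.
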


\begin{proof}
We analyze $P_{\lambda}(x)$ through the Pauli path expansion.
We represent the depolarizing channel $\mathcal{N}_{\lambda}$ in terms of the Pauli basis:

\begin{center}
    \begin{tikzpicture}
        \begin{scope}
            \draw(0,0)--(2,0);
            \draw(0,-1)--(2,-1);
            
        \smallbox{1}{-.5}{.8}{.8}
        \node at (1,-.5){$\mathcal{N}_{\lambda}$};
        \end{scope}

        \begin{scope}[shift={(3,0)}]
        \node at (-.5,-.5){$=\displaystyle \frac{1}{2}$};
            \draw(0,0)--(.6,0)--(.6,-1)--(0,-1);
            \draw(2,0)--(1.4,0)--(1.4,-1)--(2,-1);
            
        \end{scope}

         \begin{scope}[shift={(8.5,0)}]
        \node at (-1.5,-.5){$+\sum\limits_{P=X,Y,Z} \displaystyle \frac{1-\lambda}{2}$};
            \draw(0,0)--(.6,0)--(.6,-1)--(0,-1);
            \draw(2,0)--(1.4,0)--(1.4,-1)--(2,-1);
            
        \smallbox{.6}{-.5}{0.25}{.25}
        \smallbox{1.4}{-.5}{0.25}{.25}
        
        \node at (.6,-.5){$P$};
         \node at (1.4,-.5){$P$};
        \end{scope}

    \end{tikzpicture}
\end{center}
and

\begin{center}
    \begin{tikzpicture}
        \begin{scope}
            \draw(0,0)--(2,0);
            \draw(0,-1)--(2,-1);
            
        \smallbox{1}{-.5}{.8}{.8}
        \node at (1,-.5){$\mathcal{N}_{\lambda}^{\otimes n}$};
        \end{scope}

         \begin{scope}[shift={(4.5,0)}]
        \node at (-1.5,-.5){$=\sum\limits_{\vec{P}} \displaystyle \frac{(1-\lambda)^{h(\vec{P})}}{2^n}$};
            \draw(0,0)--(.6,0)--(.6,-1)--(0,-1);
            \draw(2,0)--(1.4,0)--(1.4,-1)--(2,-1);
            
        \smallbox{.6}{-.5}{0.25}{.25}
        \smallbox{1.4}{-.5}{0.25}{.25}
        
        \node at (.6,-.5){$P$};
         \node at (1.4,-.5){$P$};
         
         \node at (2.5,-.6) {$.$};
        \end{scope}

    \end{tikzpicture}
\end{center}
The sum is over all $n$-qubit Pauli words
$\vec{P}=\bigotimes_{\ell=0}^{n-1}P_\ell$, where
$P_\ell\in\{I,X,Y,Z\}$, and
$h(\vec{P})=\#\{\ell:P_\ell\neq I\}$. The expansion follows from
\begin{equation}
\mathcal{N}_\lambda(I)=I,
\qquad
\mathcal{N}_\lambda(P)=(1-\lambda)P
\quad(P=X,Y,Z).
\end{equation}

For a formal evaluation of the diagram, let
\begin{equation}
M_x:=QFT\ket{x}\!\bra{x}QFT^{-1}.
\end{equation}
The Fourier state factors as
\begin{equation}
QFT\ket{x}=\frac{1}{\sqrt{2^n}}
\bigotimes_{\ell=n-1}^{0}
(\ket{0}+\omega^{x2^{\ell}}\ket{1}),
\end{equation}
and hence
\begin{equation}
M_x=\frac{1}{2^n}\bigotimes_{\ell=n-1}^{0}
\begin{bmatrix}
1 & \omega^{-x2^{\ell}} \\
\omega^{x2^{\ell}} & 1
\end{bmatrix}.
\end{equation}
On each tensor factor, the diagonal matrix elements are unchanged by
$\mathcal{N}_\lambda$, while the off-diagonal elements are multiplied
by $1-\lambda$. Therefore
\begin{equation}
\bra{v}\mathcal{N}_\lambda^{\otimes n}(M_x)\ket{u}
=2^{-n}(1-\lambda)^{h(u,v)}\omega^{x(v-u)}.
\end{equation}

The channel $\mathcal{N}_\lambda$ is self-adjoint with respect to the
Hilbert--Schmidt inner product. Using the control-register state
$\rho_r$ from the previous section, we obtain
\begin{align*}
P_\lambda(x)
&=\operatorname{Tr}\!\left(
M_x\mathcal{N}_\lambda^{\otimes n}(\rho_r)\right)\\
&=\operatorname{Tr}\!\left(
\mathcal{N}_\lambda^{\otimes n}(M_x)\rho_r\right)\\
&=2^{-2n}\sum_{0\leq u,v<2^n;\ r\mid u-v}
(1-\lambda)^{h(u,v)}\omega^{x(v-u)}\\
&=2^{-2n}\sum_{0\leq u,v<2^n;\ r\mid u-v}
(1-\lambda)^{h(u,v)}\omega^{x(u-v)},
\end{align*}
where the last line follows by exchanging $u$ and $v$.
\end{proof}

Each matching Pauli word gives a path contributing to $P(x)$. In the ideal circuit
all compatible paths retain their full weight resonance at an effective $x$. The factor $(1-\lambda)^{h(u,v)}=(1-\lambda)^{h(\vec{P})}$ shows
that depolarizing noise contracts a path once for every non-identity
Pauli factor, which makes this resonance fragile.

Note that $h(u,v)\leq h(u)+h(v)$, for the Hamming weight of $u$ and $v$ in binary. 
There is a unique pair $(u',v')$ such that
\begin{equation}
u_i-v_i=u'_i-v'_i, ~\forall~0\leq i<n ;
\end{equation}
\begin{equation}
h(u')+h(v')=h(u',v')=h(u,v).
\end{equation}
The bit strings of $u'$ and $v'$ are different at $h(u,v)$ positions, and zero elsewhere. 
We call $(u',v')$ the Minimal Weight Representative of $(u,v)$.
There are exactly $2^{n-h(u,v)}$ pairs with the same minimal weight representative $(u',v')$ for all possible choices of bit strings at the rest $n-h(u,v)$ positions.
Therefore
\begin{equation}\label{Equ: P=sum h2}
P_{\lambda}(x)=
2^{-2n} \sum_{0\leq u,v < 2^n ; r \mid u-v; h(u)+h(v)=h(u,v)}
\omega^{x (u-v)}  (1-\lambda)^{h(u,v)} 2^{n-h(u,v)}.
\end{equation}
 
Denote the set of minimal weight representatives with Hamming weight at most $d$ by
\begin{equation}
MWR(d):=\{(u,v): 0\leq u,v < 2^n, r \mid u-v, h(u)+h(v)=h(u,v), h(u,v)\leq d\},
\end{equation}
with cardinality $L(d)$.

\begin{theorem}
For the Shor circuit with single-qubit depolarizing noise rate
$\lambda$ and any integer $d\geq 0$, the probability of outcome $x$ is
bounded as
\begin{equation}
P_{\lambda}(x)\leq \frac{L(d)}{2^{n}}
+(1-\lambda)^{d+1}\left(\frac{1}{r}+\frac{1}{2^n}\right).
\end{equation}
Here $L(d)$ is the number of minimal representative pairs $(u,v)$ such that
$0\leq u,v < 2^n$, $r \mid u-v$ with Hamming weight
$h(u,v)\leq d$.
\end{theorem}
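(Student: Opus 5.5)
The plan is to split the sum for $P_\lambda(x)$ according to the Hamming distance $h(u,v)$. Terms with $h(u,v)\le d$ will be controlled by counting minimal weight representatives. Terms with $h(u,v)\ge d+1$ will be controlled by a crude absolute-value bound, using that the noise factor is then at most $(1-\lambda)^{d+1}$.

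Since $P_\lambda(x)$ is a real probability, we have $P_\lambda(x)\le |P_\lambda(x)|\le |S_{\le d}|+|S_{>d}|$. Here $S_{\le d}$ and $S_{>d}$ are the partial sums of Eq.~\eqref{Equ: P=sum h1} over pairs with $h(u,v)\le d$ and $h(u,v)\ge d+1$, respectively. The passage from Eq.~\eqref{Equ: P=sum h1} to Eq.~\eqref{Equ: P=sum h2} only regroups pairs $(u,v)$ according to their minimal weight representative $(u',v')$. This regrouping preserves both $h(u,v)=h(u',v')$ and $u-v=u'-v'$, hence also the condition $r\mid u-v$ and the phase. Therefore it can be carried out inside each of the two partial sums separately.

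\textbf{Low-frequency part.} Apply the regrouping of Eq.~\eqref{Equ: P=sum h2} to $S_{\le d}$. This gives
\[
S_{\le d}=2^{-2n}\sum_{(u,v)\in MWR(d)}\omega^{x(u-v)}(1-\lambda)^{h(u,v)}2^{n-h(u,v)} .
\]
Each summand has modulus at most $2^{-2n}\cdot 2^{n}=2^{-n}$, because $(1-\lambda)^{h}\le 1$ and $2^{-h}\le 1$. There are $L(d)$ summands, so $|S_{\le d}|\le L(d)/2^n$.

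\textbf{High-frequency part.} Keep $S_{>d}$ in the original form of Eq.~\eqref{Equ: P=sum h1} and bound it termwise:
\[
|S_{>d}|\le 2^{-2n}(1-\lambda)^{d+1}\,\#\{(u,v):0\le u,v<2^n,\ r\mid u-v\}.
\]
It remains to count the pairs. For each residue $j\in\{0,\dots,r-1\}$, let $c_j$ be the number of $u\in[0,2^n)$ with $u\equiv j \pmod r$. Then the pair count equals $\sum_j c_j^2$, with $c_j\le 2^n/r+1$ and $\sum_j c_j=2^n$. Hence
\[
\sum_j c_j^2\le \Bigl(\frac{2^n}{r}+1\Bigr)2^n ,
\]
and so $|S_{>d}|\le (1-\lambda)^{d+1}\bigl(\frac1r+\frac1{2^n}\bigr)$. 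Adding the two bounds gives the theorem.

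\textbf{Main obstacle.} The only delicate point is the high-frequency sum, which is signed and does not obviously carry the factor $(1-\lambda)^{d+1}$ in a useful way. I expect the crude triangle inequality to suffice: the count of pairs with $r\mid u-v$ already produces the $\frac1r$ scale, so no cancellation is needed. I would double-check that the regrouping really respects the split by $h$, which it does because $h$ is constant on each class of pairs sharing a minimal weight representative.
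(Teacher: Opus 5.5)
Your proposal is correct and is essentially the paper's proof. It uses the same split at $h(u,v)=d$. The low-weight part is bounded through the minimal-weight-representative form: $L(d)$ terms, each of size at most $2^{-n}$. The high-weight part is bounded termwise in the original sum, using the pair count $2^n(2^n/r+1)$. Your extra remarks make explicit what the paper leaves implicit: that the regrouping preserves $h$ and so can be done within each partial sum, and the $\sum_j c_j^2$ count.
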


\begin{proof}
In Eq.~\eqref{Equ: P=sum h2}, there are $L(d)$ terms in the sum with low weight $h(u,v)\leq d$ and each term is bounded by $2^{-n}$.
So the sum of these low weight terms is bounded by $L(d) 2^{-n}$. 
In Eq.~\eqref{Equ: P=sum h1},
every term with high weight $h(u,v)\geq d+1$ is bounded in
absolute value by $2^{-2n}(1-\lambda)^{d+1}$. The total number of
terms is bounded by $2^{n}(\frac{2^n}{r}+1)$. Hence the absolute
value of the sum of the high-weight terms is bounded by
$(1-\lambda)^{d+1}\left(\frac{1}{r}+\frac{1}{2^n}\right)$.
Therefore, $P_{\lambda}(x)$ is bounded by the sum of the two bounds.
\end{proof}

There is also a simple bound independent of $r$:
\begin{equation}\label{Equ: L combinatorial bound}
L(d)\leq \sum_{j=0}^{d}2^j\binom{n}{j}=O(n^d)
\end{equation}
for every fixed $d$. Indeed, a minimal representative of weight $j$
is specified by choosing its $j$ nonzero positions and, at every such
position, choosing whether the nonzero bit belongs to $u$ or to $v$.
The additional condition $r\mid u-v$ can only reduce this count.

To state the consequence for the usual resonance peaks precisely,
define
\begin{equation}
\mathcal{E}_r:=\left\{x\in\{0,\ldots,2^n-1\}\ \middle|\
\exists k\in\{0,\ldots,r-1\},\
\left|\frac{x}{2^n}-\frac{k}{r}\right|
\leq\frac{1}{2^{n+1}}\right\}.
\end{equation}
There are at most $2r$ outcomes in $\mathcal{E}_r$. Hence the theorem,
$r<N$, and $2^n>N^2$ imply
\begin{align}
P_{\lambda}(\mathcal{E}_r)
&=\sum_{x\in\mathcal{E}_r}P_{\lambda}(x) \notag\\
&\leq 2r\left(
\frac{L(d)}{2^{n}}+(1-\lambda)^{d+1}
\left(\frac{1}{r}+\frac{1}{2^n}\right)\right) \notag\\
&\leq \frac{2(L(d)+1)}{2^{n/2}}+2(1-\lambda)^d.
\label{Equ: noisy resonance bound}
\end{align}
For every fixed noise rate $0<\lambda\leq1$,
Eq.~\eqref{Equ: L combinatorial bound} and
Eq.~\eqref{Equ: noisy resonance bound} give
\begin{equation}
\lim_{d\to\infty} \lim_{N\to\infty}P_{\lambda}(\mathcal{E}_r)=0.
\end{equation}
Indeed, for each fixed $d$ the first term vanishes as $N$ grows, and
then $d$ can be chosen large enough to make $(1-\lambda)^d$
arbitrarily small. Thus the constant lower bound in
Eq.~\eqref{Equ:Shor lower bound} is destroyed by any fixed positive
noise rate in this one-layer model.

Now for a fixed $n$, let us discuss what the bound $L(d)$ of lower frequency terms says about Shor's factoring algorithm. For a fixed $d$, denote the set of binary numbers with
Hamming weight at most $d$ by
\begin{equation}
W_d:=\{u~:~0\leq u <2^n, h(u)\leq d\}.
\end{equation}
The set $W_d$ has $O(n^d)$ elements. For each $u\in W_d$, the
residue $a^u\bmod N$ can be computed by repeated squaring. 
In this way all $L(d)$ elements of $MWR(d)$ can be listed in polynomial time $O(n^d)$.
They will produce all $L(d)$ pairs of $(u,v)$ in $MWR(d)$.
For each pair $(u_i,v_i)$ in $MWR(d)$, denote their difference as $|u_i-v_i|=\Delta_i=rq_i$. 
Then
\begin{equation}
\gcd(\Delta_1,\ldots,\Delta_m)
=r\gcd(q_1,\ldots,q_m).
\end{equation}
If the quotients $q_i$ behaved like independent random positive
integers, the asymptotic probability that their gcd is one would be
\begin{equation}
\frac{1}{\zeta(m)}
=\left(\sum_{k=1}^{\infty}\frac{1}{k^m}\right)^{-1},
\end{equation}
which equals $6/\pi^2$ for $m=2$ and approaches one exponentially in
$m$. This is only a heuristic argument which does not guaranty
that the gcd is exactly the order $r$.

In summary, the measurement distribution under one layer of
independent depolarizing noise splits into a low-Hamming-weight sector
whose modular relations can be enumerated classically for fixed $d$
and a high-weight sector suppressed by $(1-\lambda)^d$. This result is
specific to the near-resonant event $\mathcal{E}_r$ and to the stated
single-layer noise model. It does not prove an efficient classical
simulation of the full noisy circuit, exclude other successful
post-processing events, or rule out a fault-tolerant quantum
advantage. With quantum error correction the effective logical noise
rate may depend on $N$ and approach zero; that scaling and its overhead
must be analyzed separately.

\section*{Acknowledgments}

Z.L. acknowledges support from the Beijing Natural Science Foundation (Grant No. Z220002).

\bibliographystyle{apsrev4-2}
\bibliography{references}

\end{document}